\documentclass[journal]{IEEEtran}
\usepackage{amsmath}
\usepackage{amssymb}
\usepackage{amsthm}
\usepackage{bm}
\usepackage{graphicx}
\usepackage{xcolor}

\newtheorem{theorem}{Theorem}
\newtheorem{corollary}{Corollary}

\newtheorem{proposition}{Proposition}
\newtheorem{lemma}{Lemma}

\begin{document}

\title{Orbital Plane Geometry and Information Conditioning
for Doppler-Only LEO Positioning}

\author{Charles E. Thornton \\ Virginia Tech NSI, Blacksburg, VA 
\thanks{Correspondence: cthorn14@vt.edu}}

\maketitle

\begin{abstract}
We study an idealized information model for Doppler-only
positioning with low earth orbit (LEO) signals of opportunity
from a stationary receiver. Motivated by the observation that Doppler measurements from a satellite pass provide information primarily within the
associated orbital plane, we model each satellite contribution as a weighted projection onto that plane. Under this model, the combined information matrix from multiple satellites is a sum of orbital-plane projection operators. An elementary lemma grounds the model in Keplerian dynamics by establishing near rank-deficiency of the pass-integrated Fisher information along the orbital-plane normal. Closed-form expressions are derived for the eigenvalues, condition number, and worst-case Cram\'er--Rao lower bound. For two satellites, the conditioning is governed by the dihedral angle between orbital planes and the relative information strengths of the
two links. Monte Carlo evaluation of pass-integrated Doppler Fisher information matrices demonstrates that the proposed surrogate captures the dominant conditioning trends associated with orbital-plane diversity, providing a simple geometric framework for understanding the impact of constellation geometry in Doppler-only positioning systems.
\end{abstract}
\begin{IEEEkeywords}
Signals of opportunity, LEO satellites, Doppler positioning, 
information geometry, Cram\'{e}r-Rao bound, orbital geometry
\end{IEEEkeywords}

\section{Introduction}
Positioning using signals of opportunity (SoOP) from low earth orbit (LEO) satellites has attracted significant recent interest as a complement or alternative to GNSS in denied or degraded environments~\cite{kassas2019,psiaki2021}. In opportunistic settings, Doppler shift is often the primary navigation observable since time-of-arrival measurements require either transmitter cooperation or an independent timing reference. Consequently, Doppler-only positioning has emerged as a practical approach for exploiting signals from non-cooperative LEO systems such as Starlink, Orbcomm, and other communications constellations.

Existing work has demonstrated the feasibility of Doppler-based positioning and developed estimation algorithms for LEO signals~\cite{kassas2019,psiaki2021,khalife2019}, with early studies recognizing the role of Doppler contours and geometric dilution of precision~\cite{levanon1998}. More recently, GDOP-based analyses have examined constellation geometry for modern LEO systems~\cite{moralesferre2020,mclemore2021}, and Fisher-information-based analyses have characterized fundamental localization limits~\cite{emenonye2025}. Ground-station-centric statistical characterizations of LEO Doppler behavior have also recently appeared~\cite{luxu2026}, highlighting the role of station latitude and elevation constraints in the observable orbital geometry. However, the resulting information matrices are tied to detailed signal and geometry models and resist geometric interpretation. In practice, the information contributed by a satellite pass is highly anisotropic and depends strongly on orbital geometry, yet the geometric structure of Doppler-only positioning information remains poorly understood. 

In Doppler-only positioning, navigation information is extracted from the temporal evolution of Doppler measurements over a pass~\cite{psiaki2021,neinavaie2021}, and the pass-integrated Fisher information matrix (FIM) provides a natural summary of that aggregate information. Understanding its geometric structure is therefore a key step toward characterizing the conditioning and accuracy limits of Doppler-only systems. In this letter, we propose an idealized geometric model for Doppler-only LEO positioning motivated by the observation that pass-integrated Doppler information is concentrated primarily within the associated orbital plane. Each satellite pass is modeled as a weighted projection onto that plane, yielding a tractable information matrix whose eigenstructure admits closed-form analysis. Throughout this work we assume a stationary receiver; the extension to low-dynamics platforms is left for future investigation. 

\emph{Contributions:} The main contributions of this letter are as follows: 
\begin{enumerate}
    \item We provide an elementary lemma (Lemma~1) that establishes the orbital-plane rank-deficiency of the pass-integrated Doppler FIM directly from Keplerian dynamics and characterizes the conditions under which the projection approximation is valid.
    \item We present closed-form expressions for the eigenvalues, condition number, and worst-case CRLB of the resulting information model (Theorem~1, Corollary~1), extended to $N$ satellites via a pairwise conditioning bound (Proposition~1).
    \item We perform Monte Carlo validation against pass-integrated Doppler FIMs, including a Starlink-representative Walker-delta shell across receiver latitudes and elevation masks. This demonstrates that the surrogate captures the dominant conditioning trends and yields constellation design implications.
\end{enumerate}

\subsection{Empirical Motivation}
\label{sec:motivation}
In this subsection, $\hat{\mathbf{n}}_i$ denotes the unit normal of the orbital plane of satellite $i$, and $\mathbf{v}_3$ the eigenvector of the smallest FIM eigenvalue; this and the related projection notation are formally introduced in Section~\ref{sec:orbital}. Before stating the model, we examine whether the pass-integrated Doppler Fisher information matrix exhibits the orbital-plane structure that this work exploits.  For a stationary receiver, the gradient of the Doppler observable with respect to receiver position is
\begin{equation}
\mathbf{g}(t)
= \frac{f_L}{c\,\rho(t)}
  \bigl[\mathbf{v}_{\mathrm{sat}}(t) - \dot{\rho}(t)\,\hat{\mathbf{r}}(t)\bigr],
\label{eq:grad}
\end{equation}
where $f_L$ is the carrier frequency, $c$ is the speed of light, $\rho(t) = \lVert \mathbf{r}_{\mathrm{sat}}(t) - \mathbf{r}_{\mathrm{rx}}\rVert$ is the satellite-to-receiver range, $\mathbf{v}_{\mathrm{sat}}(t)$ is the satellite velocity vector,
$\hat{\mathbf{r}}(t) = [\mathbf{r}_{\mathrm{sat}}(t) - \mathbf{r}_{\mathrm{rx}}]/\rho(t)$ is the unit line-of-sight vector from the receiver to the satellite, and $\dot{\rho}(t) = \mathbf{v}_{\mathrm{sat}}^T(t)\,\hat{\mathbf{r}}(t)$ is the range rate. Modeling the Doppler measurements as independent Gaussian observations with variance $\sigma_f^2$, the gradient \eqref{eq:grad} is the measurement Jacobian and the
pass-integrated FIM is
\begin{equation}
\mathbf{J}_{\mathrm{pass}}
= \frac{\Delta t}{\sigma_f^2}
  \sum_{k} \mathbf{g}(t_k)\mathbf{g}(t_k)^T,
\label{eq:Jpass}
\end{equation}
where $\Delta t$ is the sampling interval; the prefactor $\Delta t$ approximates the continuous-time information integral $\int \mathbf{g}(t)\mathbf{g}(t)^T\,dt/\sigma_f^2$ and does not affect the conditioning analysis below. The eigenstructure of $\mathbf{J}_{\mathrm{pass}}$ characterizes the directions in which a
satellite pass provides strong or weak positioning information.
\begin{table}[t]
\setlength{\tabcolsep}{4pt}
\renewcommand{\arraystretch}{0.95}
\centering
\caption{Eigenstructure of the pass-integrated Doppler FIM
$\mathbf{J}_{\mathrm{pass}}$ for four representative $550$\,km circular
passes ($f_L = 1.575$\,GHz, $\sigma_f = 1$\,Hz, $10^\circ$ elevation mask,
receiver at latitude $20^\circ$\,N), compared with the ideal projection
model $\gamma_i \mathbf{P}_i$. Near rank-deficiency along
$\hat{\mathbf{n}}_i$ is confirmed by $\lambda_3/\lambda_1$ and by the
alignment $|\mathbf{v}_3 \cdot \hat{\mathbf{n}}_i|$ of the minimum
eigenvector with the orbital-plane normal.}
\label{tab:motivation}
\begin{tabular}{lcccc}
\hline
Inclination & $\lambda_2/\lambda_1$ & $\lambda_3/\lambda_1$ &
$\hat{\mathbf{n}}_i^T \mathbf{J}_{\mathrm{pass}} \hat{\mathbf{n}}_i/\lambda_1$ &
$|\mathbf{v}_3 \cdot \hat{\mathbf{n}}_i|$ \\
\hline
$30^\circ$ & 0.390 & $1.1\times10^{-4}$ & 0.025 & 0.967 \\
$45^\circ$ & 0.393 & $3.6\times10^{-5}$ & 0.008 & 0.990 \\
$53^\circ$ & 0.393 & $2.3\times10^{-5}$ & 0.005 & 0.993 \\
$98^\circ$ & 0.388 & $1.6\times10^{-4}$ & 0.038 & 0.951 \\
\hline
$\gamma_i \mathbf{P}_i$ & 1 & 0 & 0 & 1 \\
\hline
\end{tabular}
\end{table}
Table~\ref{tab:motivation} reports the eigenstructure of
$\mathbf{J}_{\mathrm{pass}}$, evaluated via \eqref{eq:Jpass}, for four
representative 550\,km circular passes spanning inclinations from
$30^\circ$ to $98^\circ$, alongside that of the ideal surrogate
$\gamma_i \mathbf{P}_i$ with $\gamma_i =
\tfrac{1}{2}\operatorname{tr}(\mathbf{J}_{\mathrm{pass}})$. The
conditioning analysis scales uniformly with $f_L^2$ at fixed integration
time and SNR, so the structural conclusions are not specific to the L-band
reference frequency used here. In every case the third
eigenvalue is roughly four orders of magnitude smaller than the largest,
the normal-direction energy is $0.5$--$3.8\%$ of $\lambda_1$, and the
minimum eigenvector aligns with $\hat{\mathbf{n}}_i$ to within
$|\mathbf{v}_3 \cdot \hat{\mathbf{n}}_i| \ge 0.95$, all supporting the
projection structure. Two limitations are noted: the within-plane
eigenvalues are unequal ($\lambda_2/\lambda_1 \approx 0.39$, versus unity
for the surrogate), and the residual normal-direction energy is nonzero.
The model captures the dominant rank-deficiency structure while idealizing
the within-plane anisotropy. Section~II shows that this rank-deficiency
structure follows analytically from orbital dynamics
(Lemma~\ref{lem:normal}).

\section{Orbital Plane Information Model}
\label{sec:orbital}
The empirical rank deficiency observed in Section~\ref{sec:motivation} is not incidental;
it follows directly from orbital dynamics. The following elementary lemma
formalizes, within the measurement model~\eqref{eq:grad}, the long-standing qualitative
observation that Doppler measurements provide weak positioning information
transverse to the orbital plane~\cite{psiaki2021,levanon1998}.
 \begin{lemma}[Normal-direction suppression]\label{lem:normal}
Let the satellite follow a Keplerian orbit, whose plane contains the
geocenter, with unit normal $\hat{\mathbf{n}}$, and let
$d = \mathbf{r}_{\mathrm{rx}}^{T}\hat{\mathbf{n}}$ denote the receiver's
signed off-plane distance. Then the Doppler gradient~\eqref{eq:grad} satisfies
\begin{equation}\label{eq:lemma_normal}
  \mathbf{g}(t)^{T}\hat{\mathbf{n}}
  \;=\; \frac{f_L}{c}\,\frac{\dot{\rho}(t)\,d}{\rho^{2}(t)} .
\end{equation}
Moreover, $\|\mathbf{g}(t)\| = (f_L/c)\,v_{\perp}(t)/\rho(t)$, where
$v_{\perp}(t) = \|\mathbf{v}_{\mathrm{sat}}(t) -
\dot{\rho}(t)\hat{\mathbf{r}}(t)\|$ is the satellite velocity component
transverse to the line of sight, so that
$|\mathbf{g}(t)^{T}\hat{\mathbf{n}}| / \|\mathbf{g}(t)\|
= (|\dot{\rho}(t)|/v_{\perp}(t)) \, (d/\rho(t))$.
\end{lemma}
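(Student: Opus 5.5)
Since the satellite's orbital plane passes through the geocenter with unit normal $\hat{\mathbf{n}}$, Keplerian motion keeps both $\mathbf{r}_{\mathrm{sat}}(t)$ and $\mathbf{v}_{\mathrm{sat}}(t)$ in that plane. Hence $\mathbf{r}_{\mathrm{sat}}^T\hat{\mathbf{n}} = 0$ and $\mathbf{v}_{\mathrm{sat}}^T\hat{\mathbf{n}} = 0$ for all $t$; this is the only dynamical input needed, and I will state it first. It follows from conservation of angular momentum $\mathbf{h} = \mathbf{r}_{\mathrm{sat}}\times\mathbf{v}_{\mathrm{sat}}$, with $\hat{\mathbf{n}} = \mathbf{h}/\|\mathbf{h}\|$, in an inertial geocentric frame. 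I will work in the same frame in which \eqref{eq:grad} is written, and take the receiver to be stationary in it, as assumed throughout.

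Next I compute the normal component of the line-of-sight vector:
\begin{equation*}
\hat{\mathbf{r}}(t)^T\hat{\mathbf{n}} = \frac{(\mathbf{r}_{\mathrm{sat}}(t)-\mathbf{r}_{\mathrm{rx}})^T\hat{\mathbf{n}}}{\rho(t)} = -\frac{d}{\rho(t)} .
\end{equation*}
Substituting this and $\mathbf{v}_{\mathrm{sat}}^T\hat{\mathbf{n}}=0$ into \eqref{eq:grad} gives
\begin{equation*}
\mathbf{g}^T\hat{\mathbf{n}} = \frac{f_L}{c\rho}\Bigl[0 - \dot{\rho}\cdot\Bigl(-\frac{d}{\rho}\Bigr)\Bigr] = \frac{f_L}{c}\,\frac{\dot{\rho}\,d}{\rho^2},
\end{equation*}
which is \eqref{eq:lemma_normal}. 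The only care needed is the sign convention: $\hat{\mathbf{r}}$ points from the receiver to the satellite, and $d$ is measured from the geocenter toward the receiver. These two produce the two minus signs that cancel.

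For the norm, $\|\mathbf{g}\| = (f_L/(c\rho))\,\|\mathbf{v}_{\mathrm{sat}}-\dot{\rho}\hat{\mathbf{r}}\|$ holds by definition. I will note that $\dot{\rho} = \mathbf{v}_{\mathrm{sat}}^T\hat{\mathbf{r}}$ makes $\mathbf{v}_{\mathrm{sat}}-\dot{\rho}\hat{\mathbf{r}} = (\mathbf{I}-\hat{\mathbf{r}}\hat{\mathbf{r}}^T)\mathbf{v}_{\mathrm{sat}}$, the component of the velocity orthogonal to the line of sight. Its norm is therefore $v_\perp$, which justifies the name. Dividing the two expressions gives the stated ratio $(|\dot{\rho}|/v_\perp)(|d|/\rho)$; the statement writes $d$ where $|d|$ is meant, or equivalently assumes $d\ge 0$ after orienting $\hat{\mathbf{n}}$. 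This is valid whenever $v_\perp(t)>0$. That condition fails only if the satellite moves exactly along the line of sight, which cannot happen over a visible pass unless the receiver lies in the orbital plane on the ground track. I will flag this as a mild nondegeneracy assumption.

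There is no real obstacle; the main subtlety is the frame. Equation \eqref{eq:grad} treats $\mathbf{r}_{\mathrm{rx}}$ as constant while the orbital plane is inertially fixed. If the paper's $\mathbf{r}_{\mathrm{rx}}$ is Earth-fixed, Earth rotation makes $d$ slowly time-varying, and the identity then holds pointwise in time with $d=d(t)$. I would remark on this rather than complicate the proof.
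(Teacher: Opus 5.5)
Your proof is correct and follows essentially the same argument as the paper: in-plane $\mathbf{r}_{\mathrm{sat}}$ and $\mathbf{v}_{\mathrm{sat}}$ give $\hat{\mathbf{r}}^T\hat{\mathbf{n}} = -d/\rho$, substituting into \eqref{eq:grad} yields \eqref{eq:lemma_normal}, and the norm identity follows from $\mathbf{v}_{\mathrm{sat}} - \dot{\rho}\hat{\mathbf{r}} = (\mathbf{I}-\hat{\mathbf{r}}\hat{\mathbf{r}}^T)\mathbf{v}_{\mathrm{sat}}$. You are right that the ratio should carry $|d|$ rather than $d$, and your $v_\perp>0$ caveat is never actually binding: $v_\perp=0$ would put the receiver on a tangent line of the orbit, and every such line lies at least the perigee radius from the geocenter.
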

\begin{IEEEproof}
For a Keplerian orbit the plane contains the geocenter, so
$\mathbf{r}_{\mathrm{sat}}(t)^{T}\hat{\mathbf{n}} =
\mathbf{v}_{\mathrm{sat}}(t)^{T}\hat{\mathbf{n}} = 0$ for all $t$. Hence
\begin{equation*}
  \hat{\mathbf{r}}^{T}\hat{\mathbf{n}}
  = \frac{[\mathbf{r}_{\mathrm{sat}} -
     \mathbf{r}_{\mathrm{rx}}]^{T}\hat{\mathbf{n}}}{\rho}
  = -\frac{d}{\rho},
\end{equation*}
and substituting into~\eqref{eq:grad},
\begin{equation*}
  \mathbf{g}^{T}\hat{\mathbf{n}}
  = \frac{f_L}{c\rho}\!\left(\mathbf{v}_{\mathrm{sat}}^{T}\hat{\mathbf{n}}
    - \dot{\rho}\,\hat{\mathbf{r}}^{T}\hat{\mathbf{n}}\right)
  = \frac{f_L}{c}\,\frac{\dot{\rho}\,d}{\rho^{2}} .
\end{equation*}
The norm identity follows since
$\mathbf{v}_{\mathrm{sat}} - \dot{\rho}\hat{\mathbf{r}}$ is the component
of $\mathbf{v}_{\mathrm{sat}}$ orthogonal to $\hat{\mathbf{r}}$, and
$\dot{\rho} = \mathbf{v}_{\mathrm{sat}}^{T}\hat{\mathbf{r}}$.
\end{IEEEproof}
Since the FIM~\eqref{eq:Jpass} accrues as $\mathbf{g}(t)\mathbf{g}(t)^{T}$ at each
instant, the suppression of Lemma~\ref{lem:normal} holds pointwise in time
and therefore for the pass-integrated FIM over any observation window: the
normal-direction information is smaller than the in-plane information by
the pass-averaged square of $(|\dot{\rho}|/v_{\perp})(d/\rho)$. The
suppression is strongest for high-elevation passes, for which $d/\rho$
remains small throughout the pass, and weakens toward grazing geometries.
Because \eqref{eq:lemma_normal} is nonzero whenever $d \neq 0$, the lemma
also predicts the small but nonzero residual normal-direction energy
(0.5--3.8\%) observed in Table~\ref{tab:motivation} and its dependence on pass elevation.
Departures from ideal geometry (Earth-rotation receiver motion of
$\leq 0.47$\,km/s versus $\|\mathbf{v}_{\mathrm{sat}}\| \approx
7.6$\,km/s, small eccentricity, $J_2$ regression over a pass) perturb
\eqref{eq:lemma_normal} only weakly. Motivated by
Lemma~\ref{lem:normal} and validated empirically in Section~\ref{sec:motivation}, we adopt
the following geometric surrogate.
 
Let $\hat{\mathbf n}_i$ denote the unit normal vector of the orbital plane associated with satellite $i$. Define the projection matrix
\begin{equation}
\mathbf{P}_i = \mathbf{I}-\hat{\mathbf n}_i\hat{\mathbf n}_i^T .
\end{equation}
 The information contribution from satellite \(i\) is modeled as
\begin{equation}
\mathbf{J}_i =\gamma_i \mathbf{P}_i
\end{equation}
where $\gamma_i>0$ is a scalar information strength
representing the aggregate effect of Doppler signal quality,
carrier frequency, observation duration, and satellite-receiver
geometry. Among these factors, $\gamma_i$ scales as $f_L^2 /
\sigma_f^2$ from \eqref{eq:Jpass}, so carrier frequency is the
dominant physical lever; integration time and SNR enter
linearly through the number of measurements and $\sigma_f^2$,
respectively. The in-plane geometric factor depends on the
receiver-satellite geometry over the pass and is not easily
decomposed further without reference to a specific orbit, but
in the Monte Carlo evaluation of Section~\ref{sec:numerical}
it is absorbed into $\gamma_i = \tfrac{1}{2}\operatorname{tr}(\mathbf{J}_{\mathrm{pass},i})$. The isotropic in-plane weighting of the surrogate (equal information along both in-plane axes) is an idealization; its impact on positioning accuracy is quantified in Section~\ref{sec:discussion}.

For $N$ satellites the total information matrix is 
\begin{equation}
\mathbf{J} = \textstyle \sum_{i=1}^{N} \gamma_i \mathbf{P}_i = \sum_{i=1}^{N} \gamma_i
\bigl(\mathbf{I} - \hat{\mathbf{n}}_i \hat{\mathbf{n}}_i^T\bigr).
\label{eq:JtotalN}
\end{equation}
The combined matrix $\mathbf{J}$ is rank-deficient along any
direction orthogonal to all orbital planes; for $N \ge 3$
non-coplanar orbital normals such a direction generically does not
exist and $\mathbf{J}$ is full rank. The two-satellite case is
therefore the critical configuration: it isolates the pairwise
conditioning effect governed by the dihedral angle between planes,
and the closed-form analysis of Section~\ref{sec:eigen} carries
over to larger constellations as a pairwise lower bound
(Section~\ref{sec:nsat}).
 
For two satellites the total information matrix reduces to
 \begin{equation}
\mathbf{J}
=
\gamma_1 \mathbf{P}_1+\gamma_2 \mathbf{P}_2 .
\label{eq:Jtotal}
\end{equation}
Under this normalization $\mathbf{J}$ carries units of position Fisher
information (m$^{-2}$) and acts directly as a position FIM, so
worst-case position-variance bounds follow from $1/\lambda_{\min}(\mathbf{J})$.

\section{Eigenstructure of the Orbital Plane Information Model}
\label{sec:eigen}
\subsection{Two-Orbital-Plane Geometry}
Consider the two-satellite model of \eqref{eq:Jtotal}.
Let $\phi \in [0^\circ,90^\circ]$ denote the acute dihedral
angle between orbital planes $\mathcal P_1$ and $\mathcal P_2$,
defined as $\phi = \arccos(|\hat{\mathbf{n}}_1 \cdot
\hat{\mathbf{n}}_2|)$ so that $\phi = 0^\circ$ for coplanar
orbits and $\phi = 90^\circ$ for orthogonal orbital planes.
For an arbitrary angle $\phi' \in [0^\circ,180^\circ]$ between
plane normals, conditioning depends only on the acute reduction
$\phi = \min(\phi', 180^\circ - \phi')$, since the discriminant
in \eqref{eq:eigenvalues} depends only on $\cos^2\phi$. The
following theorem characterizes the eigenstructure.
\begin{theorem}[Eigenstructure of Two Orbital Planes]
\label{thm:condnum}
The eigenvalues of 
$\mathbf{J}$ are:
\begin{multline}
\lambda_{1,2} = \frac{(\gamma_1+\gamma_2) \pm 
\sqrt{(\gamma_1-\gamma_2)^2 + 4\gamma_1\gamma_2\cos^2\phi}}{2}, 
\\ \lambda_3 = \gamma_1 + \gamma_2
\label{eq:eigenvalues}
\end{multline}
For equal information levels $\gamma_1 = \gamma_2 = \gamma$:
\begin{equation}
\lambda_{\min} = \gamma(1 - \cos\phi) = 2 \gamma \sin^2(\phi/2)
\label{eq:lambdamin}
\end{equation}
giving condition number:
\begin{equation}
\kappa = \frac{\lambda_{\max}}{\lambda_{\min}} = 
\frac{1}{\sin^2(\phi/2)}
\label{eq:condnum}
\end{equation}
\end{theorem}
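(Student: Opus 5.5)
Write $\mathbf{J} = (\gamma_1+\gamma_2)\mathbf{I} - \mathbf{M}$ with $\mathbf{M} = \gamma_1\hat{\mathbf n}_1\hat{\mathbf n}_1^T + \gamma_2\hat{\mathbf n}_2\hat{\mathbf n}_2^T$. Every eigenvector of $\mathbf{M}$ is then an eigenvector of $\mathbf{J}$, with eigenvalue $(\gamma_1+\gamma_2)-\mu$ whenever $\mu$ is an eigenvalue of $\mathbf{M}$. So it suffices to find the spectrum of the rank-$\le 2$ matrix $\mathbf{M}$.

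First I handle the direction orthogonal to both normals. Take $\mathbf{u} = \hat{\mathbf n}_1\times\hat{\mathbf n}_2$ (nonzero when $\phi\neq 0$). It satisfies $\mathbf{M}\mathbf{u}=0$, which gives $\lambda_3 = \gamma_1+\gamma_2$. When $\phi=0$ the normals coincide and any vector orthogonal to them serves; the formulas below still hold, with $\lambda_2=0$.

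Next I restrict $\mathbf{M}$ to $\mathrm{span}\{\hat{\mathbf n}_1,\hat{\mathbf n}_2\}$. I would avoid choosing an orthonormal basis and use the Gram trick: the nonzero eigenvalues of $\mathbf{A}\mathbf{A}^T$ coincide with those of $\mathbf{A}^T\mathbf{A}$. Take $\mathbf{A} = [\sqrt{\gamma_1}\hat{\mathbf n}_1,\ \sqrt{\gamma_2}\hat{\mathbf n}_2]$, so that $\mathbf{M} = \mathbf{A}\mathbf{A}^T$ and
\begin{equation*}
\mathbf{A}^T\mathbf{A} = \begin{bmatrix} \gamma_1 & \sqrt{\gamma_1\gamma_2}\,c \\ \sqrt{\gamma_1\gamma_2}\,c & \gamma_2 \end{bmatrix}, \qquad c = \hat{\mathbf n}_1\cdot\hat{\mathbf n}_2 .
\end{equation*}
Its trace is $\gamma_1+\gamma_2$ and its determinant is $\gamma_1\gamma_2(1-c^2)$. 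The quadratic formula gives
\begin{equation*}
\mu_\pm = \tfrac12\Bigl[(\gamma_1+\gamma_2) \pm \sqrt{(\gamma_1+\gamma_2)^2 - 4\gamma_1\gamma_2(1-c^2)}\Bigr],
\end{equation*}
and the discriminant simplifies to $(\gamma_1-\gamma_2)^2 + 4\gamma_1\gamma_2 c^2$. Since $c^2=\cos^2\phi$, the result depends only on the acute dihedral angle. Then $\lambda = (\gamma_1+\gamma_2)-\mu_\mp$ gives exactly the same pair $\tfrac12[(\gamma_1+\gamma_2)\pm\sqrt{\cdot}]$, because the map $\mu\mapsto(\gamma_1+\gamma_2)-\mu$ reflects the pair about its own midpoint. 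This yields \eqref{eq:eigenvalues}.

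For equal weights $\gamma_1=\gamma_2=\gamma$, the square root becomes $2\gamma|\cos\phi| = 2\gamma\cos\phi$ for $\phi\in[0^\circ,90^\circ]$. So the in-plane eigenvalues are $\gamma(1\pm\cos\phi)$, and $\lambda_{\min} = \gamma(1-\cos\phi) = 2\gamma\sin^2(\phi/2)$. To get $\kappa$ I need $\lambda_{\max}$, which requires comparing $\lambda_3 = 2\gamma$ with $\gamma(1+\cos\phi)\le 2\gamma$. Hence $\lambda_{\max}=\lambda_3=2\gamma$ and $\kappa = 2\gamma/(2\gamma\sin^2(\phi/2)) = 1/\sin^2(\phi/2)$, which is \eqref{eq:condnum}.

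The computation is routine. The main point needing care is the identification of $\lambda_{\max}$ and $\lambda_{\min}$, together with the sign of $\cos\phi$. Both are settled by the acute convention $\phi\in[0^\circ,90^\circ]$ and by the bound $\sqrt{\cdot}\le\gamma_1+\gamma_2$, which follows from $c^2\le 1$. That bound also confirms $\lambda_{\min}\ge 0$, with equality only in the coplanar case.
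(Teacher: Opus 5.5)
Your proof is correct. It arrives at the same quadratic as the paper, but by a genuinely different route.

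The paper fixes coordinates $\hat{\mathbf n}_1=[0,0,1]^T$ and $\hat{\mathbf n}_2=[\sin\phi,0,\cos\phi]^T$ and writes $\mathbf J$ out entrywise. It reads off the decoupled $y$-axis, which is your $\hat{\mathbf n}_1\times\hat{\mathbf n}_2$, with eigenvalue $\gamma_1+\gamma_2$. It then takes the characteristic polynomial of the $x$--$z$ block of $\mathbf J$ itself, with trace $\gamma_1+\gamma_2$ and determinant $\gamma_1\gamma_2\sin^2\phi$.

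You instead shift, writing $\mathbf J=(\gamma_1+\gamma_2)\mathbf I-\mathbf M$. You get the spectrum of the rank-two matrix $\mathbf M$ from the weighted Gram matrix $\mathbf A^T\mathbf A$ of the normals, and then reflect.

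The paper's computation is more concrete: it shows the matrix and its eigenvectors explicitly. The price is a without-loss-of-generality rotation and a sign choice for $\hat{\mathbf n}_2$.

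Your argument is coordinate-free, which buys two things:
\begin{enumerate}
\item The dependence on $c^2=\cos^2\phi$ alone, and hence only on the acute dihedral angle, falls out automatically rather than needing a side remark.
\item It carries over verbatim to $N$ satellites, giving $\lambda_{\min}(\mathbf J)=\sum_i\gamma_i-\lambda_{\max}(\mathbf M)$ with $\mathbf M=\sum_i\gamma_i\hat{\mathbf n}_i\hat{\mathbf n}_i^T$. The nonzero spectrum of this $\mathbf M$ equals that of the $N\times N$ weighted Gram matrix of the normals. This makes explicit that ill-conditioning means the weighted normals cluster along one axis.
\end{enumerate}

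I would make two small tightenings. First, invoke $\det(t\mathbf I_3-\mathbf A\mathbf A^T)=t\,\det(t\mathbf I_2-\mathbf A^T\mathbf A)$ rather than only the statement about nonzero eigenvalues. This handles multiplicities uniformly, including the coplanar case $\mu_-=0$. Second, state that $\kappa$ is finite only for $\phi>0$.
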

\begin{proof}
Choose coordinates with $\hat{\mathbf{n}}_1 = 
[0,0,1]^T$ and $\hat{\mathbf{n}}_2 = 
[\sin\phi, 0, \cos\phi]^T$ for 
$\phi \in [0^\circ, 90^\circ]$. Substituting 
into~\eqref{eq:Jtotal}:
\begin{equation}
\mathbf{J} = 
\begin{bmatrix} 
\gamma_1 + \gamma_2\cos^2\phi & 0 & -\gamma_2\sin\phi\cos\phi \\ 
0 & \gamma_1+\gamma_2 & 0 \\ 
-\gamma_2\sin\phi\cos\phi & 0 & \gamma_2\sin^2\phi 
\end{bmatrix}
\label{eq:matrix}
\end{equation}
The $y$-direction decouples with eigenvalue 
$\lambda_3 = \gamma_1 + \gamma_2$. The characteristic equation 
of the $x$-$z$ block is:
\begin{equation}
\lambda^2 - \lambda(\gamma_1+\gamma_2) + \gamma_1 \gamma_2\sin^2\phi = 0
\label{eq:charpoly}
\end{equation}
giving $\lambda_{1,2}$ as in~\eqref{eq:eigenvalues}. 
For $\gamma_1 = \gamma_2 = \gamma$, the discriminant reduces to 
$4 \gamma ^2\cos^2\phi$, giving 
$\lambda_{\min} = \gamma(1-\cos\phi)$. For $\phi>0$ the largest eigenvalue is
$\lambda_{\max}=2\gamma$, so
$\kappa = 2\gamma/[\gamma(1-\cos\phi)]
= 1/\sin^2(\phi/2)$,
which gives~\eqref{eq:condnum}.
\end{proof}
\begin{corollary}[Worst-Case Accuracy]
\label{cor:design}
The smallest eigenvalue of $\mathbf{J}$ is
\begin{equation}
\lambda_{\min}
=
\frac{
(\gamma_1+\gamma_2)
-
\sqrt{
(\gamma_1-\gamma_2)^2
+
4\gamma_1\gamma_2\cos^2\phi
}
}{2}.
\end{equation}
The worst-case CRLB is therefore
\begin{equation}
\sigma_{\text{worst}}^2
\ge
\frac{1}
{\lambda_{\min}}.
\label{eq:worst}
\end{equation}
For small $\phi$,
\begin{equation}
\lambda_{\min}
\approx
\frac{\gamma_1\gamma_2}
{\gamma_1+\gamma_2}
\phi^2,
\end{equation}
giving
\begin{equation}
\sigma_{\text{worst}}^2
\approx
\frac{\gamma_1+\gamma_2}
{\gamma_1\gamma_2\phi^2}.
\end{equation}
\end{corollary}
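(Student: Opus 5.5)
The corollary follows almost directly from Theorem~\ref{thm:condnum}, so I will read off $\lambda_{\min}$ from~\eqref{eq:eigenvalues} and then expand it for small $\phi$.

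\emph{Identifying the minimum.} Among the three eigenvalues, $\lambda_3=\gamma_1+\gamma_2$ equals the sum $\lambda_1+\lambda_2$ of the two $x$--$z$ block eigenvalues. Since both block eigenvalues are nonnegative (their product is $\gamma_1\gamma_2\sin^2\phi\ge 0$ by~\eqref{eq:charpoly}), $\lambda_3$ is the largest eigenvalue. The smallest is therefore the minus-root in~\eqref{eq:eigenvalues}, which is the stated expression.

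\emph{Worst-case bound.} Because $\mathbf{J}$ acts as a position FIM (the normalization remark after~\eqref{eq:Jtotal}), the CRLB gives $\operatorname{Cov}(\hat{\mathbf{r}})\succeq\mathbf{J}^{-1}$ whenever $\phi>0$, so that $\mathbf{J}$ is invertible. For any unit direction $\mathbf{u}$ this yields $\operatorname{var}(\mathbf{u}^T\hat{\mathbf{r}})\ge\mathbf{u}^T\mathbf{J}^{-1}\mathbf{u}$. The right-hand side is maximized by the eigenvector of $\lambda_{\min}$, where it equals $1/\lambda_{\min}$, and this gives~\eqref{eq:worst}.

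\emph{Small-angle expansion.} This is the only step needing care, because computing the minus-root directly subtracts two nearly equal quantities. I will avoid that cancellation with the product of roots from~\eqref{eq:charpoly}:
\[
\lambda_{\min}=\frac{\gamma_1\gamma_2\sin^2\phi}{\lambda_{\max}^{(xz)}},
\qquad
\lambda_{\max}^{(xz)}=\frac{(\gamma_1+\gamma_2)+\sqrt{(\gamma_1-\gamma_2)^2+4\gamma_1\gamma_2\cos^2\phi}}{2}.
\]
As $\phi\to0$, the square root tends to $\sqrt{(\gamma_1+\gamma_2)^2}=\gamma_1+\gamma_2$, so $\lambda_{\max}^{(xz)}=(\gamma_1+\gamma_2)+O(\phi^2)$. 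Together with $\sin^2\phi=\phi^2+O(\phi^4)$, this gives
\[
\lambda_{\min}=\frac{\gamma_1\gamma_2}{\gamma_1+\gamma_2}\,\phi^2\bigl(1+O(\phi^2)\bigr).
\]
Inverting then gives the stated approximation for $\sigma_{\text{worst}}^2$. The product-of-roots rewriting is the main point: it turns the cancellation into a clean leading-order term.
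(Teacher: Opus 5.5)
Your proposal is correct and follows the paper's route: the paper gives no separate proof and treats the corollary as immediate from Theorem~\ref{thm:condnum} (the minus root of the $x$--$z$ block) plus a small-angle expansion, and your product-of-roots rewriting $\lambda_{\min}=\gamma_1\gamma_2\sin^2\phi/\lambda_{\max}^{(xz)}$ is just a cancellation-free way of doing that expansion. One small tightening: a nonnegative product does not by itself make both block eigenvalues nonnegative, so also cite their positive sum $\gamma_1+\gamma_2$ (or simply $\mathbf{J}\succeq 0$ as a sum of weighted projections).
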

For fixed total information $S = \gamma_1 + \gamma_2$, the
small-angle bound \eqref{eq:worst} is minimized at $\gamma_1 =
\gamma_2 = S/2$ by the arithmetic--geometric mean inequality,
identifying link-budget balance as a design lever independent
of geometry.
\subsection{Extension to $N$ Satellites}
\label{sec:nsat}
The two-satellite eigenstructure of Theorem~\ref{thm:condnum}
extends to a conditioning guarantee for arbitrary $N$. While the
general $N$-satellite eigenvalues admit no comparably simple
closed form, the worst-case conditioning is controlled by the
best-conditioned pair.
\begin{proposition}[Pairwise Conditioning Bound]
\label{prop:pairwise}
For the $N$-satellite information matrix $\mathbf{J} = \sum_{i=1}^N
\gamma_i \mathbf{P}_i$ with $\gamma_i > 0$,
\begin{equation}
\lambda_{\min}(\mathbf{J}) \;\ge\; \max_{i<j}\,
\lambda_{\min}\!\left(\gamma_i \mathbf{P}_i + \gamma_j \mathbf{P}_j\right),
\label{eq:pairwise}
\end{equation}
where each pairwise term is given in closed form by
Corollary~\ref{cor:design} with dihedral angle $\phi_{ij} =
\arccos(|\hat{\mathbf{n}}_i \cdot \hat{\mathbf{n}}_j|)$.
Consequently, the worst-case CRLB satisfies
\begin{equation}
\sigma^2_{\mathrm{worst}} \;\le\;
\min_{i<j}\, \frac{2}{(\gamma_i+\gamma_j) -
\sqrt{(\gamma_i-\gamma_j)^2 + 4\gamma_i\gamma_j\cos^2\phi_{ij}}}.
\label{eq:pairwise_crlb}
\end{equation}
\end{proposition}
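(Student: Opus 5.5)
The argument is a monotonicity one in the Loewner order. Each summand $\gamma_k \mathbf{P}_k$ is positive semidefinite, because $\mathbf{P}_k = \mathbf{I} - \hat{\mathbf{n}}_k\hat{\mathbf{n}}_k^T$ is an orthogonal projector and $\gamma_k > 0$. So for any fixed pair $i<j$,
\begin{equation*}
\mathbf{J} \;=\; \gamma_i\mathbf{P}_i + \gamma_j\mathbf{P}_j + \sum_{k\neq i,j}\gamma_k\mathbf{P}_k \;\succeq\; \gamma_i\mathbf{P}_i + \gamma_j\mathbf{P}_j .
\end{equation*}
By the Rayleigh quotient characterization $\lambda_{\min}(\mathbf{A}) = \min_{\|\mathbf{u}\|=1}\mathbf{u}^T\mathbf{A}\mathbf{u}$, the minimum eigenvalue is monotone under this order. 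Hence
\begin{equation*}
\lambda_{\min}(\mathbf{J}) \ge \lambda_{\min}(\gamma_i\mathbf{P}_i+\gamma_j\mathbf{P}_j)
\end{equation*}
for every pair, and taking the maximum over pairs gives \eqref{eq:pairwise}.

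Next I would identify each pairwise term with the closed form. The two-satellite matrix $\gamma_i\mathbf{P}_i+\gamma_j\mathbf{P}_j$ has exactly the form \eqref{eq:Jtotal}. The proof of Theorem~\ref{thm:condnum} rotates coordinates so that the normals are $[0,0,1]^T$ and $[\sin\phi,0,\cos\phi]^T$. This rotation is valid for any two unit normals. If the angle between normals exceeds $90^\circ$, replace $\hat{\mathbf{n}}_j$ by $-\hat{\mathbf{n}}_j$, which leaves $\mathbf{P}_j$ unchanged; this justifies using the acute angle $\phi_{ij}=\arccos(|\hat{\mathbf{n}}_i\cdot\hat{\mathbf{n}}_j|)$. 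Corollary~\ref{cor:design} then supplies the pairwise value
\begin{equation*}
\tfrac{1}{2}\Bigl[(\gamma_i+\gamma_j)-\sqrt{(\gamma_i-\gamma_j)^2+4\gamma_i\gamma_j\cos^2\phi_{ij}}\Bigr].
\end{equation*}
This value is nonnegative, because the square root is at most $\gamma_i+\gamma_j$ since $\cos^2\phi_{ij}\le 1$.

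For \eqref{eq:pairwise_crlb}, I interpret $\sigma^2_{\mathrm{worst}}$ as the worst-direction CRLB $\max_{\|\mathbf{u}\|=1}\mathbf{u}^T\mathbf{J}^{-1}\mathbf{u} = 1/\lambda_{\min}(\mathbf{J})$. This is the attained bound of Corollary~\ref{cor:design}, so it is an upper bound on the bound, not a bound on an estimator. Inverting \eqref{eq:pairwise} gives $1/\lambda_{\min}(\mathbf{J}) \le \min_{i<j} 1/\lambda_{\min}(\gamma_i\mathbf{P}_i+\gamma_j\mathbf{P}_j)$, and substituting the closed form yields the stated expression.

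The main obstacle is the degenerate case in which a pair is coplanar ($\phi_{ij}=0$). Then the pairwise $\lambda_{\min}$ vanishes and its reciprocal is $+\infty$. I would handle this with the convention $1/0=+\infty$. That pair then contributes nothing to the minimum, and the bound stays nontrivial provided at least one pair has $\phi_{ij}>0$, which is also exactly when $\mathbf{J}$ is invertible. The only other point needing care is making the interpretation of $\sigma^2_{\mathrm{worst}}$ as $1/\lambda_{\min}(\mathbf{J})$ explicit, so that the direction of the inequality in \eqref{eq:pairwise_crlb} is unambiguous.
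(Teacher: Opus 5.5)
Your proof is correct and follows the paper's argument: split off the positive semidefinite remainder $\sum_{k\ne i,j}\gamma_k\mathbf{P}_k$, use the Rayleigh-quotient (Courant--Fischer) characterization to get $\lambda_{\min}(\mathbf{J})\ge\lambda_{\min}(\gamma_i\mathbf{P}_i+\gamma_j\mathbf{P}_j)$ for each pair, then substitute the closed form of Corollary~\ref{cor:design}. Your extra remarks fill in details the paper leaves implicit: the sign flip of $\hat{\mathbf{n}}_j$ that justifies using the acute angle, the $1/0=+\infty$ convention for coplanar pairs, and reading $\sigma^2_{\mathrm{worst}}$ as the bound value $1/\lambda_{\min}(\mathbf{J})$, which is what makes the $\le$ direction meaningful.
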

\begin{proof}
Fix any pair $(i,j)$ and write $\mathbf{J} = \mathbf{J}_{ij} + \mathbf{R}$ with $\mathbf{J}_{ij} =
\gamma_i \mathbf{P}_i + \gamma_j \mathbf{P}_j$ and $\mathbf{R} = \sum_{k \neq i,j} \gamma_k
\mathbf{P}_k$. Each $\mathbf{P}_k$ is an orthogonal projection with eigenvalues
$\{1,1,0\}$, so $\mathbf{P}_k \succeq 0$ and hence $\mathbf{R} \succeq 0$. For any
unit vector $\hat{\mathbf{u}}$, $\hat{\mathbf{u}}^T \mathbf{J}
\hat{\mathbf{u}} = \hat{\mathbf{u}}^T \mathbf{J}_{ij} \hat{\mathbf{u}} +
\hat{\mathbf{u}}^T \mathbf{R} \hat{\mathbf{u}} \ge \hat{\mathbf{u}}^T
\mathbf{J}_{ij} \hat{\mathbf{u}}$. Minimizing both sides over
$\hat{\mathbf{u}}$ and applying the Courant--Fischer theorem
gives $\lambda_{\min}(\mathbf{J}) \ge \lambda_{\min}(\mathbf{J}_{ij})$. Since this
holds for every pair, \eqref{eq:pairwise} follows;
\eqref{eq:pairwise_crlb} substitutes the closed-form
$\lambda_{\min}(\mathbf{J}_{ij})$ of Corollary~\ref{cor:design}.
\end{proof}
\begin{figure*}[t]
\centering
\includegraphics[scale=0.43]{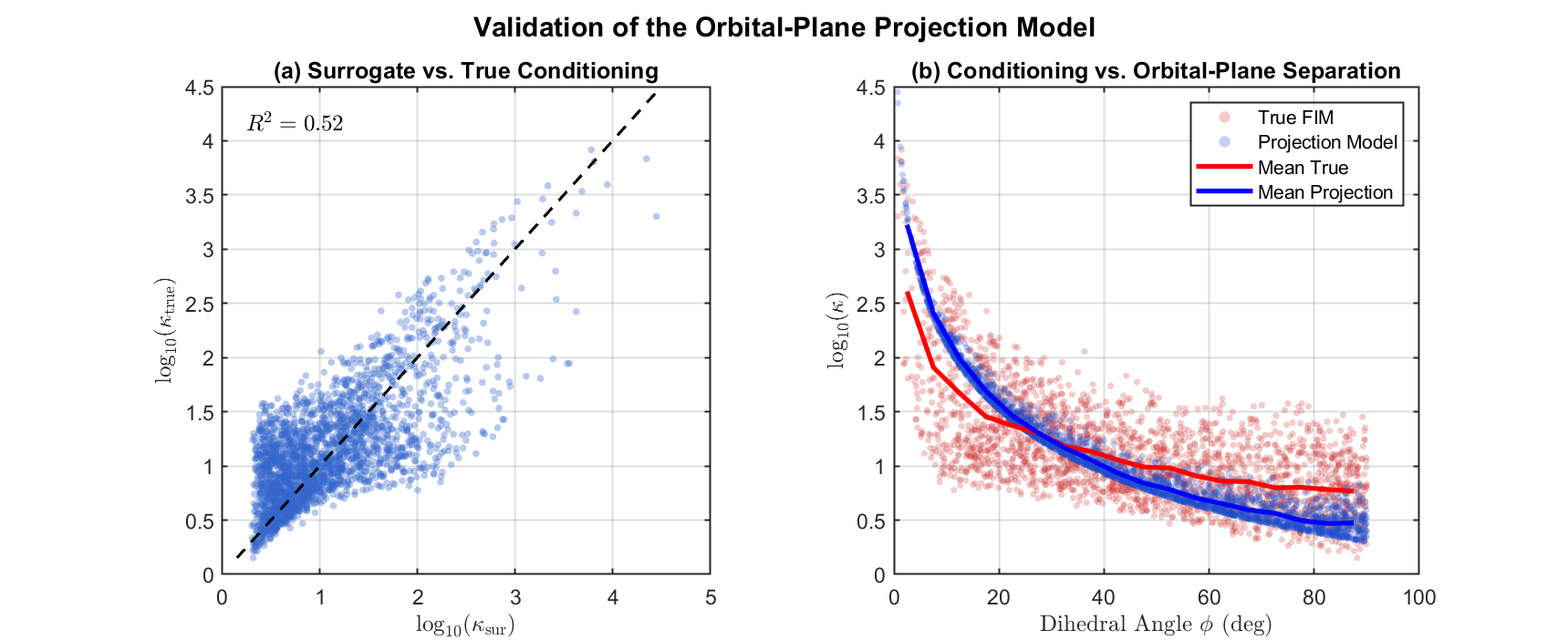}
\caption{Validation of the orbital-plane projection model.
(a) Comparison between the condition number predicted by the
projection surrogate and that obtained from the true
pass-integrated Doppler FIM. The dashed line indicates
perfect agreement. (b) Conditioning versus orbital-plane
dihedral angle. Both the true Doppler FIM and the
projection surrogate exhibit rapidly degrading conditioning
as orbital planes become coplanar.}
\label{fig:surrogate_validation}
\end{figure*}
For $N=2$ the residual $\mathbf{R}$ is empty and \eqref{eq:pairwise}
holds with equality, recovering Theorem~\ref{thm:condnum}. For
equal information $\gamma_i = \gamma$, the bound reduces to
$\lambda_{\min}(\mathbf{J}) \ge \gamma\,(1 - \cos\phi_{\max})$, where
$\phi_{\max} = \max_{i<j}\phi_{ij}$ is the largest pairwise
plane separation. The worst-case conditioning of a constellation
is therefore no worse than that of its best-separated, best-balanced
pair: a single well-conditioned pair suffices to guarantee the
whole constellation's accuracy floor, and additional satellites
can only improve it. The bound is tightest in the ill-conditioned
regime of interest, since pairwise near-coplanarity across all
pairs forces the normals to cluster and precludes collective
spanning of the cross-track direction.
\section{Numerical Validation}
\label{sec:numerical}
\label{se:validC}
To assess whether orbital-plane geometry predicts the
conditioning of practical Doppler information matrices, we perform a Monte Carlo study. Each trial generates two independent circular 550\,km orbits with
random inclination, right ascension of the ascending node (RAAN), and
orbital phase. The pass-integrated Doppler FIM of each pass is
computed via \eqref{eq:Jpass} with a $10^\circ$ elevation mask, the
two are summed to form the true two-satellite FIM, and the
corresponding surrogate is constructed from \eqref{eq:Jtotal} with
$\gamma_i=\tfrac{1}{2}\operatorname{tr}(\mathbf{J}_i)$. Condition
numbers were evaluated for both models over 2500 trials.
 
Fig.~\ref{fig:surrogate_validation}(a) compares the
condition number predicted by the projection surrogate
with that of the true Doppler FIM. The surrogate captures a substantial fraction of the observed variation in conditioning,
with correlation coefficient $\rho=0.72$
($R^2=0.52$) between
$\log_{10}(\kappa_{\mathrm{sur}})$ and
$\log_{10}(\kappa_{\mathrm{true}})$. The moderate $R^2$
reflects the isotropic in-plane weighting of the surrogate
rather than a failure of the orbital-plane projection
structure, since the within-plane anisotropy
($\lambda_2/\lambda_1 \approx 0.27$) is the dominant source
of approximation error as discussed in Section~\ref{sec:discussion}.
 
Fig.~\ref{fig:surrogate_validation}(b) shows the dependence
of conditioning on orbital-plane separation. Both the
surrogate and the true Doppler FIM exhibit rapidly
increasing condition number as the dihedral angle
decreases, confirming that orbital-plane diversity is
a dominant geometric factor governing Doppler-only
positioning performance. The $0.27$ ensemble mean of $\lambda_{2}/\lambda_1$ is lower than the $\approx 0.39$ of
Section~\ref{sec:motivation} because the random orbital sampling
includes grazing-elevation passes, which exhibit weaker in-plane
diversity than the high-elevation passes selected there. The
primary approximation error is therefore the isotropic in-plane
weighting, not residual normal-direction information.
 
To assess sensitivity to pass conditions, the ensemble was
re-evaluated with observation windows truncated from $\pm 900$\,s to
$\pm 450$\,s and $\pm 225$\,s about a random epoch, and with elevation
masks of $5^\circ$, $10^\circ$, and $20^\circ$. The projection structure
is insensitive to these choices: the smallest FIM eigenvalue remains below
$0.1\%$ of the trace in every condition, the surrogate correlation is
stable ($R^2 = 0.52$--$0.56$ across window lengths), and only the
within-plane ratio varies mildly
($\bar{r} = \overline{\lambda_2/\lambda_1} = 0.22$--$0.27$), consistent
with the window-invariant suppression of Lemma~\ref{lem:normal}.
 
\textit{Walker-shell validation.} To evaluate a realistic
constellation, the random ensemble was replaced by a Walker-delta shell
representative of the Starlink first shell ($550$\,km, $i = 53^\circ$, 72
planes). Because passes are collected at different times, Earth rotation
renders the node-to-receiver longitude offset uniformly distributed per
pass; each trial therefore draws two visible passes from the shell at
independent epochs. For a common-inclination shell the pairwise plane
geometry admits the closed form
\begin{equation}
\cos\varphi' \;=\; \cos^{2} i + \sin^{2} i\,\cos\Delta\Omega,
\label{eq:walker}
\end{equation}
where $\Delta\Omega$ is the node separation, so the acute
dihedral angle is capped at $\min(2i, 90^\circ)$. By
Theorem~\ref{thm:condnum}, a shell with $i < 45^\circ$ therefore incurs a
two-satellite conditioning floor of $\kappa \ge 1/\sin^{2} i$ regardless of
the number of planes; the floor holds for any information levels, since at fixed total information $S$, $\lambda_{\min} = \bigl[S - (S^{2} - 4\gamma_1\gamma_2\sin^{2}\phi)^{1/2}\bigr]/2$ is maximized at $\gamma_1 = \gamma_2$. For $i = 53^\circ$ the cap is inactive, but \eqref{eq:walker} concentrates the induced dihedral distribution at small plane separations.
 
Table~\ref{tab:walker} reports the surrogate-to-true
conditioning correlation, the median true condition number, and the mean
within-plane ratio across receiver latitudes and elevation masks (600
trials per condition). Three trends emerge. First, the surrogate tracks
single-shell conditioning closely, with $R^2$ increasing from $0.59$ to
$0.97$ toward the ill-conditioned high-latitude regime, consistent with
the tightness of the pairwise bound noted in Section~\ref{sec:nsat}.
Second, near-coplanar pass pairs ($\varphi < 20^\circ$) constitute
$29$--$50\%$ of visible pairs at latitudes up to $45^\circ$, compared to
$16$--$21\%$ for the inclination-diverse ensemble. This demonstrates the
concentration of \eqref{eq:walker} near $\Delta\Omega \approx 0$. Third,
conditioning collapses as receiver latitude approaches the shell
inclination: at $60^\circ$ latitude the median condition number rises to
$\approx 70$--$80$ under $5^\circ$--$10^\circ$ masks and to $\approx 590$
under a $20^\circ$ mask, where $96\%$ of pairs are near-coplanar and
$\bar{r}$ falls to $0.16$; the corresponding median for the
inclination-diverse ensemble at the same location is $\approx 8$. For the
inclination-diverse random ensemble the statistics of
Table~\ref{tab:walker} are stable across all conditions
($R^2 = 0.45$--$0.73$, median $\kappa = 8$--$16$,
$\bar{r} = 0.26$--$0.28$) and are therefore not tabulated. Design
implications are discussed in Section~\ref{sec:discussion}.
 
\begin{table}[t]
\centering
\caption{Walker-shell validation ($550$\,km,
$53^\circ$, 72 planes) vs. receiver latitude and elevation mask:
$R^{2}$ between $\log_{10}\kappa_{\mathrm{sur}}$ and
$\log_{10}\kappa_{\mathrm{true}}$ / median $\kappa_{\mathrm{true}}$ /
mean $\bar{r}=\overline{\lambda_2/\lambda_1}$ (600 trials per
condition).}
\label{tab:walker}
\begin{tabular}{lccc}
\hline
 & \multicolumn{3}{c}{Elevation mask \;($R^{2}$ / med.\ $\kappa$ / $\bar{r}$)} \\
Latitude & $5^\circ$ & $10^\circ$ & $20^\circ$ \\
\hline
$0^\circ$  & 0.59 / 16 / 0.26 & 0.72 / 14 / 0.27 & 0.86 / 9 / 0.27 \\
$20^\circ$ & 0.63 / 17 / 0.26 & 0.74 / 14 / 0.27 & 0.86 / 9 / 0.27 \\
$45^\circ$ & 0.70 / 13 / 0.30 & 0.78 / 12 / 0.30 & 0.82 / 14 / 0.25 \\
$60^\circ$ & 0.88 / 68 / 0.24 & 0.93 / 80 / 0.23 & 0.97 / 592 / 0.16 \\
\hline
\end{tabular}
\end{table}

\section{Discussion}
\label{sec:discussion}
 \textit{Practical implications.} 
Corollary~\ref{cor:design} gives direct design rules for 
LEO SoOP constellation geometry. The dihedral angle 
$\phi$ is the primary geometric lever: maximizing $\phi$ minimizes the condition number and improves the isotropy of the information matrix. The 
information balance condition $\gamma_1 \approx \gamma_2$ 
identifies link budget as a secondary and independent 
design parameter. The model thereby gives a quantitative
geometric account of the well-known deterioration of cross-track
accuracy under near-coplanar satellite geometries.
 
\textit{Single-shell constellations.}
For positioning against a single Walker shell, \eqref{eq:walker} makes
inclination the governing design parameter: shells with $i < 45^\circ$ cap
the pairwise dihedral angle at $2i$ and hence the achievable conditioning
at $\kappa \ge 1/\sin^{2} i$. Even for $i > 45^\circ$ the induced
dihedral distribution concentrates near coplanarity, and visibility
filtering intensifies this concentration as receiver latitude approaches
the shell inclination (Table~\ref{tab:walker}). Robust Doppler-only
positioning at latitudes near or above a shell's inclination therefore
benefits disproportionately from inclination diversity---combining shells
of different inclinations or augmenting with polar orbits---rather than
from additional planes within a single shell.
 
\textit{Receiver location and latitude dependence.}
The coupling between receiver location, visibility
constraints, and information geometry is quantified in
Table~\ref{tab:walker}. As Lemma~\ref{lem:normal} predicts, the near
rank-deficiency along $\hat{\mathbf{n}}_i$ follows from orbital dynamics
and persists at all latitudes and masks: the smallest FIM eigenvalue
remains below $0.1\%$ of the trace in every condition. By contrast, within-plane anisotropy varies with visibility geometry: $\bar{r}$ is
essentially latitude-invariant for the inclination-diverse ensemble
($0.26$--$0.28$) but falls to $0.16$ for the single Walker shell at
$60^\circ$ latitude under a $20^\circ$ mask, where only grazing, nearly
parallel pass geometries remain visible. This dependence of the observable
orbital geometry on station latitude and minimum elevation angle parallels
the ground-station-centric Doppler analysis of~\cite{luxu2026}.
 
\textit{Limitations.}
While Lemma~\ref{lem:normal} grounds the rank-deficiency
structure in orbital dynamics, the isotropic in-plane weighting remains an
idealization: the two dominant in-plane eigenvalues are generally unequal,
with ensemble mean $r := \lambda_2/\lambda_1 \approx 0.27$
($0.16$--$0.30$ across the conditions of
Table~\ref{tab:walker}). This anisotropy is the principal source of
surrogate error. Trace-matching to the true in-plane eigenvalues gives a
small-angle variance optimism of $(1+r)/2r \approx 2.4$, so the surrogate underestimates the worst-case position
standard deviation by $\sqrt{2.4} \approx 1.6\times$. More consequentially,
anisotropy imposes a conditioning floor that persists at orthogonal planes:
when the weak in-plane axis aligns with the cross-nodal direction,
$\kappa \approx 2/r$ at $\phi = 90^\circ$ rather than unity, so plane
orthogonality is necessary but not sufficient for well-conditioned
positioning. Replacing the scalar $\gamma_i$ with a $2\times2$ in-plane
covariance, while preserving tractability, remains future work.

\section{Conclusion}
We presented an idealized orbital-plane information model for Doppler-only positioning with LEO signals of opportunity, grounded in Keplerian dynamics by Lemma~\ref{lem:normal}. Representing each satellite contribution as a weighted projection onto its orbital plane yields closed-form expressions for the eigenvalues, condition number, and worst-case positioning accuracy, identifying orbital-plane separation and information balance as independent design factors. Monte Carlo evaluation against pass-integrated Doppler Fisher information matrices, including a Starlink-like Walker shell across receiver latitudes and
elevation masks, confirms that the surrogate captures the dominant conditioning trends driven by orbital-plane geometry. The principal approximation error is within-plane anisotropy, which imposes a conditioning floor that orbital-plane orthogonality alone cannot overcome.


\begin{thebibliography}{9}
\bibitem{kassas2019}
Z.~M. Kassas, J.~Morales, and J.~J. Khalife,
``New-age satellite-based navigation---STAN:
Simultaneous tracking and navigation with LEO
satellite signals,'' \emph{Inside GNSS}, vol.~14,
no.~4, pp.~56--65, Jul./Aug. 2019.
 
\bibitem{psiaki2021}
M.~L. Psiaki, ``Navigation using carrier Doppler
shift from a LEO constellation: TRANSIT on
steroids,'' \emph{NAVIGATION}, vol.~68, no.~3,
pp.~621--641, Sep. 2021.
 
\bibitem{khalife2019}
J.~J. Khalife and Z.~M. Kassas,
``Receiver design for Doppler positioning with
LEO satellites,'' in \emph{Proc. IEEE ICASSP},
Brighton, UK, May 2019, pp.~5506--5510.
 
\bibitem{levanon1998}
N. Levanon, ``Quick position determination using 1 or 2 LEO satellites," \emph{IEEE Trans. Aerosp. Electron. Syst.}, vol.~34, no.~3,
pp.~736--754, Jul. 1998.
 
\bibitem{moralesferre2020}
R.~Morales-Ferre, E.~S. Lohan, G.~Falco, and E.~Falletti,
``GDOP-based analysis of suitability of LEO constellations
for future satellite-based positioning,''
in \emph{Proc. IEEE WiSEE}, Vicenza, Italy, Oct. 2020,
pp.~147--152.
 
\bibitem{mclemore2021}
B.~McLemore and M.~L. Psiaki,
``GDOP of navigation using pseudorange and Doppler shift
from a LEO constellation,''
in \emph{Proc. ION GNSS+}, St. Louis, MO, Sep. 2021,
pp.~2471--2492.
 
\bibitem{emenonye2025}
D.-R. Emenonye, H.~S. Dhillon, and R.~M. Buehrer,
``Fundamentals of LEO-based localization,''
\emph{IEEE Trans. Inf. Theory}, vol.~71, no.~7,
pp.~5277--5311, Jul. 2025.
 
\bibitem{luxu2026}
S.~Lu and G.~Xu,
``Doppler shift analysis in LEO satellite systems:
A ground-station perspective,''
\emph{IEEE Trans. Veh. Technol.}, early access, 2026,
doi: 10.1109/TVT.2026.3701255.
 
\bibitem{neinavaie2021}
M.~Neinavaie, J.~Khalife, and Z.M.~Kassas, ``Acquisition, Doppler Tracking, and Positioning with Starlink LEO Satellites: First Results,'' \emph{IEEE Trans. Aerosp. Electron. Syst.}, vol.~58, 
no.~3,
pp.~2606--2610, Nov. 2021.
 
\end{thebibliography}
\end{document}